\renewcommand{\qed}{\hfill$\blacksquare$}
\renewenvironment{proof}{\begin{addmargin}[1em]{0em}\begin{newproof}}{\end{newproof}\end{addmargin}\qed}
\newenvironment{lemma}[2][Lemma]{\begin{trivlist}
\item[\hskip \labelsep {\bfseries #1}\hskip \labelsep {\bfseries #2.}]}{\end{trivlist}}
\begin{document}

\title{Verifying Shortest Paths in Linear Time$^*$\\

\thanks{* This paper is based upon work supported by Science, Technology, \& Innovation Funding Authority (STDF)
under grant number 45542.\\
Email: Prof. Amr Elmasry --
amr.elmasry@eui.edu.eg. 
}
}

\author{
    \IEEEauthorblockN{Ahmed Shokry\textsuperscript{*}, Amr Elmasry\textsuperscript{*,\textdagger}, Ayman Khalafallah\textsuperscript{*}, Amr Aly\textsuperscript{*}}
    \IEEEauthorblockA{\textsuperscript{*}Computer and Systems Engineering Department, Alexandria University, Egypt}
    \IEEEauthorblockA{\textsuperscript{\textdagger}Egypt University of Informatics, Egypt}
}

\makeatletter
\def\@fnsymbol#1{\ifcase#1\or *\or \dagger\or \ddagger\or \mathsection\or \mathparagraph\or \|\or **\or \dagger\dagger\or \ddagger\ddagger\else\@ctrerr\fi}
\makeatother


\maketitle

\begin{abstract}
In this paper we propose a linear-time certifying algorithm for the single-source shortest-path problem capable of verifying graphs with positive, negative, and zero arc weights. Previously proposed linear-time approaches only work for graphs with positive arc weights. 
\end{abstract}

\begin{IEEEkeywords}
Graph algorithms, shortest paths, certifying algorithms, negative weights
\end{IEEEkeywords}

\section{Introduction}
The shortest-path problem is one of the most important problems in combinatorial optimization and graph theory. It plays a critical role in a wide array of practical applications~\cite{goldberg2005computing, fu2006heuristic, panahi2008gis, soueres1996shortest, aridhi2015mapreduce}. At its core, the problem involves finding the shortest path between two vertices in a graph, which can model a lot of today's applications (e.g., road networks). Variations of the shortest-path problem include: the single-source shortest-path problem, which seeks the shortest paths from a given source vertex to all other vertices in the graph, and the all-pairs shortest-path problem, which aims to find the shortest paths between every pair of vertices~\cite{cormen2022introduction}. 
%


The single-source shortest-path problem is defined by $(G,s,l)$, where $G= (V,A)$ is a directed weighted graph, $V$
is the set of $n$ vertices, $A$ is the set of $m$ arcs, $s$ is the source vertex, and $l : A \to \mathbb{R}$ is a length function, where $l(u,v)$ is the length of the arc $(u,v)$. The shortest path is a path of arcs with the minimum total length. The shortest path is undefined if $G$ has a cycle with negative total length. 
The well-known Bellman-Ford algorithm~\cite{Bellman1958} requires $O(n \cdot m)$ time to construct the shortest-path tree from a given source vertex for a graph that may have negative weights.
 
Given a shortest-path algorithm that generates, along with each output, a certificate (witness) that confirms the path's optimality, a certifying shortest-path algorithm is important to provide a fast proof of correctness, ensuring that the computed path is indeed optimal. Certifying shortest-path algorithms are important in critical applications such as navigation systems, emergency response, and network security~\cite{mcconnell2011certifying}. In addition, such algorithms provide a benchmark for developing and testing new shortest-path algorithms.

The current linear-time certifying algorithm addresses the single-source shortest-path problem for graphs with only positive arc weights~\cite{mcconnell2011certifying}, which simplifies the problem and avoids complications such as negative-weight cycles. Still, many real-world applications, such as in financial networks or transportation systems with bidirectional cost adjustments, require handling graphs with negative or zero arc weights.


In this paper we propose a linear-time certifying algorithm for the single-source shortest-path problem that works with graphs containing positive, negative, and zero arc weights. Our algorithm runs in $O(m)$ time, offering an $O(n)$ speedup over recomputing the shortest paths and enabling efficient benchmarking for the general single-source shortest-path problem.

\section{The certifying algorithm}

To certify the shortest-path problem, the prover provides the claimed shortest-path distance $D[v]$ from the source vertex $s$ to every vertex $v$. The verifier would then determine, for every vertex $v$, whether the value $D[v]$ matches the true shortest-path distance $d[v]$, without recomputing $d[v]$.

\subsection{Constraints}
To prove that $\forall v\in V$, $D[v]$ = $d[v]$, we need to verify four sets of constraints. 

At first we need to check that the shortest path distance to the source $s$ is 0.
\begin{equation}\label{start_ct}
    D[s] = 0
\end{equation}
This constraint ensures that the distance from the source to itself is correctly initialized.
%
%

Second, we need to check that for all vertices, the shortest path distance can not be reduced. 
\begin{equation}\label{relax_ct}
    \forall (u,v) \in A,\ \ \ D[v] \leq D[u] + l(u, v)
\end{equation}
This constraint is derived from the Bellman-Ford relaxation step, ensuring no vertex $v$ can have a distance 
$D[v]$ less than the distance to any of its predecessors plus the cost of the arc connecting this predecessor to $v$. If $D[v]$ exceeds $D[u]+l(u,v)$, it would imply that a shorter path exists to $v$ through $u$, contradicting the validity of $D[v]$.

%


Then, we need to ensure that every vertex is reachable from the source via a path, such that for every arc $(u,v)$ on these paths $D[v]=D[u]+l(u,v)$.
\begin{align}\label{reachability_ct}
\forall x \in V, &\ \exists \text{ path } P \text{ from } s \text{ to } x \text{ such that } \notag \\
& \forall (u,v) \in P,\ D[v] = D[u] + l(u,v),\ 
\end{align}
These constraints ensure that every vertex is reachable from $s$ through a valid shortest path $P$, where every vertex along the way on $P$ has the prefix sub-path of $P$ as its own shortest path from $s$. This guarantees the span of the shortest-path tree over all vertices, thereby ensuring the validity of the certificate.

Finally, for every vertex $v$ that is not reachable from $s$ in $G$, it should be the case that the prover has set the claimed shortest-path value $D[v]$ to infinity.
\begin{equation}\label{unreachable}
    \forall v \in V \text{ not reachable from } s,\ \ \ D[v] = \infty
\end{equation}
Algorithm~\ref{sssp} summarizes the proposed certifying algorithm.

 \begin{algorithm}[!t]
\caption{ \textsc{Certify}($G$, $s$, $D[v]$)}\label{sssp}
 \begin{algorithmic}[1]
		\IF{$D[s] \neq 0$}
            \STATE \textbf{return} False
    \ENDIF
    \STATE stack.push($s$)
    \STATE mark $s$ as the only reachable vertex

    \WHILE{stack is not empty}
        \STATE $u \gets$ stack.pop()
        \FOR{\texttt{each $v \in u.adj[]$}}
					\IF{$ D[v] > D[u] + l(u,v)$}
            \STATE \textbf{return} False
					\ELSE
						\IF{$v \notin reachable$ \textbf{and} $D[v]=D[u]+l(u, v)$}
                \STATE mark $v$ as reachable
                \STATE stack.push($v$)
						\ENDIF
					\ENDIF	
        \ENDFOR
    \ENDWHILE
    
    \FOR{\texttt{all $v \in V$}}
        \IF{$v \notin reachable$ \textbf{and} $D[v] \neq \infty$}
            \STATE \textbf{return} False
        \ENDIF
    \ENDFOR
\STATE \textbf{return} True	

 \end{algorithmic}
 \end{algorithm}

\subsection{Correctness}

\begin{lemma}{1}
If the graph has a negative cycle, there must exist an arc $(u,v)$ on the cycle where $D[v] > D[u] +l(u,v)$,
and the constraints can not be satisfied.
\end{lemma}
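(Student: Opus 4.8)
The plan is to prove the contrapositive by contradiction: assume all four constraints hold, and then show that the graph cannot contain a negative cycle. Suppose for contradiction that there is a cycle $C = v_0, v_1, \dots, v_{k-1}, v_k = v_0$ whose total length is negative, that is $\sum_{i=0}^{k-1} l(v_i, v_{i+1}) < 0$. Every arc $(v_i, v_{i+1})$ on this cycle is an arc of $A$, so constraint~\eqref{relax_ct} applies to each of them, giving $D[v_{i+1}] \le D[v_i] + l(v_i, v_{i+1})$.

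\medskip

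\noindent The key step is then to sum these $k$ inequalities around the entire cycle. Summing the left-hand sides yields $\sum_{i=0}^{k-1} D[v_{i+1}]$ and summing the right-hand sides yields $\sum_{i=0}^{k-1} D[v_i] + \sum_{i=0}^{k-1} l(v_i, v_{i+1})$. Because the cycle is closed, the vertices $v_1, \dots, v_k$ are exactly the vertices $v_0, \dots, v_{k-1}$ in a different order, so the two distance sums $\sum_{i=0}^{k-1} D[v_{i+1}]$ and $\sum_{i=0}^{k-1} D[v_i]$ are identical and cancel. What survives is the inequality $0 \le \sum_{i=0}^{k-1} l(v_i, v_{i+1})$, which directly contradicts the assumption that the cycle has negative total length.

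\medskip

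\noindent This contradiction shows that if constraint~\eqref{relax_ct} holds for every arc, no negative cycle can exist; equivalently, whenever a negative cycle is present, constraint~\eqref{relax_ct} must be violated on at least one of its arcs, i.e.\ there is an arc $(u,v)$ on the cycle with $D[v] > D[u] + l(u,v)$. One subtlety worth handling carefully is the role of $D$-values equal to $\infty$: I would note that if some vertex on the cycle has $D[\,\cdot\,] = \infty$, then since the $D$-values cancel telescopically the argument still goes through, but to keep the summation well defined I would either restrict attention to the case where all cycle vertices carry finite $D$-values (arguing separately that an arc from a finite-$D$ vertex into an infinite-$D$ vertex already violates~\eqref{relax_ct}) or adopt the convention $\infty - \infty$ is avoided by the cancellation.

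\medskip

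\noindent The main obstacle I anticipate is precisely this bookkeeping around infinite distances and the formal justification that the telescoping cancellation is valid; the arithmetic of summing the relaxation inequalities around the cycle is routine, but stating cleanly why the finite-valued case is the only one that needs the summation argument — and why the infinite-valued case is handled immediately by a single violated constraint — is where the proof needs care to be fully rigorous.
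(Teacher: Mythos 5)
Your proposal is correct and takes essentially the same approach as the paper: sum Inequality~\ref{relax_ct} over the arcs of the cycle, observe that the $D$-values cancel telescopically, and conclude that the total cycle weight is nonnegative, contradicting the assumed negative cycle. Your extra care about infinite $D$-values is a minor refinement the paper omits, not a different argument.
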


\begin{proof}
If Inequality~\ref{relax_ct} is satisfied allover the arcs of a cycle, summing up the left-hand side and the right-hand side, the prover's distances cancels each other resulting in the sum of the weights of the arcs of the cycle being nonnegative. This means that if the graph has a negative cycle, then this constraint can not be satisfied allover the arcs of the cycle.
\end{proof}

\begin{lemma}{2}
Assume that the algorithm accepts the certificate, then for every vertex $v$ in $V$, the prover's shortest-path distance $D[v]$ will be equal to the actual shortest-path distance $d[v]$. Hence, $D[v] = d[v]$, $\forall v \in V$. 
\end{lemma}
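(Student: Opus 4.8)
The plan is to establish two-sided bounds $d[v] \le D[v]$ and $D[v] \le d[v]$ for every vertex, after first pinning down exactly which vertices the algorithm marks as reachable. Write $R$ for the set of vertices the algorithm marks reachable, i.e.\ those it reaches from $s$ along \emph{tight} arcs (arcs with $D[v] = D[u] + l(u,v)$). The crucial preliminary step is to show that $R$ coincides with the set of vertices that are reachable from $s$ in $G$. One inclusion is immediate, since a tight path is in particular a path. For the converse I would argue by contradiction: given a reachable vertex $v \notin R$, take any path $s = v_0, v_1, \dots, v_k = v$ and let $v_j$ be the first vertex on it outside $R$, so that $v_{j-1} \in R$. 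When $v_{j-1}$ was popped the arc $(v_{j-1}, v_j)$ was examined; acceptance forces $D[v_j] \le D[v_{j-1}] + l(v_{j-1}, v_j) < \infty$, and if equality held then $v_j$ would have been marked and pushed, placing it in $R$. Hence the inequality is strict and $D[v_j] < \infty$, which contradicts the final loop's requirement that every vertex outside $R$ carry $D[v_j] = \infty$ (constraint~\ref{unreachable}). This identifies $R$ as exactly the reachable set.

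With $R$ identified, I would treat the two kinds of vertices separately. For $v \notin R$ the vertex is unreachable in $G$, so $d[v] = \infty$ by definition, while acceptance gives $D[v] = \infty$; hence $D[v] = d[v]$. For $v \in R$, note first that every arc leaving a vertex of $R$ is examined by the while loop, so the relaxation inequality~\ref{relax_ct} holds on all such arcs; in particular it holds around any cycle consisting of reachable vertices, so by Lemma~1 no reachable negative cycle exists and $d[v]$ is a well-defined finite minimum over paths.

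To get the upper bound, I would take a shortest path $P$ from $s$ to $v$; all its vertices are reachable, hence in $R$, so every arc of $P$ satisfies~\ref{relax_ct}. Telescoping these inequalities along $P$ and using $D[s] = 0$ (constraint~\ref{start_ct}) yields $D[v] \le l(P) = d[v]$. For the lower bound I would use the reachability certificate: since $v \in R$, there is a tight path $P^{\ast}$ from $s$ to $v$ on which~\ref{relax_ct} holds with equality at every arc, and summing these equalities gives $D[v] = l(P^{\ast})$, the length of a genuine $s$-$v$ path, whence $D[v] \ge d[v]$. The two bounds give $D[v] = d[v]$, completing the argument over all of $V$.

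I expect the main obstacle to be the first step, namely reconciling the fact that the algorithm verifies the relaxation inequality only on arcs emanating from already-reached vertices with the need, in the upper-bound telescoping, for that inequality along an entire shortest path. The identification $R = \{\text{vertices reachable from } s\}$ is what closes this gap, and its proof is delicate because it must account for the \emph{timing} of the marking step: an arc that is tight at the moment it is scanned forces its head into $R$, so a reachable vertex can escape $R$ only through a strict inequality, which the terminal infinity-check then rules out.
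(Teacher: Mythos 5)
Your proof is correct, and its core is the same two-sided argument the paper gives: $D[v] \le d[v]$ by applying Inequality~\ref{relax_ct} along a shortest path (your telescoping is the paper's induction in different clothing), and $D[v] \ge d[v]$ from a tight path (the paper runs a second induction using the triangle inequality $d[v_{i+1}] \le d[v_i] + l(v_i,v_{i+1})$, while you observe directly that the tight path is an $s$--$v$ path of length exactly $D[v]$, so $d[v] \le D[v]$ --- same content). Where you genuinely go beyond the paper is the preliminary identification of the marked set $R$ with the set of vertices reachable from $s$ in $G$. The paper's proof operates purely at the level of constraints \ref{start_ct}--\ref{unreachable}, implicitly assuming that acceptance by Algorithm~\ref{sssp} implies all four constraints hold as stated; but the algorithm only tests Inequality~\ref{relax_ct} on arcs leaving vertices it has already marked, and only discovers tight paths through its restricted DFS, so that implication itself requires an argument. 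Your timing argument --- a graph-reachable vertex can remain outside $R$ only through a strict inequality on the frontier arc, which forces a finite $D$-value and hence rejection in the final loop --- is exactly what closes that gap, so your proof certifies the algorithm, whereas the paper's certifies the constraint system. You also settle a point the paper passes over silently: acceptance implies (via Lemma~1) that no negative cycle lies among reachable vertices, so $d[v]$ is a well-defined finite minimum. The trade-off is clear: the paper's version is shorter if one grants the algorithm-to-constraints equivalence; yours is longer but is the one that actually matches the lemma's hypothesis, namely that the \emph{algorithm} accepts.
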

\begin{proof}

First, we prove that $D[v] \leq d[v]$ for every vertex $v \in V$ that is reachable from $s$.
Consider any vertex $v \neq s$ that is reachable from $s$, and let $v_0 = s, v_1,..., v_k = v$ be a shortest path from $s$ to $v$. Then, $d[v_{i+1}] = d[v_i] + l(v_i, v_{i+1})$ for all $i$ where $0 \leq i \leq k$. We prove by induction on $i$ that $D[v_i] \leq d[v_i]$ for all $i$. For $i = 0$,  $D[v_0] = D[s]= 0 = d[v_0]$ from Equation~\ref{start_ct}. Assume that the induction hypothesis is true up to $v_i$ for some $i \geq 0$. Using Inequality~\ref{relax_ct}, 
 \begin{flalign*}
 D[v_{i+1}] &\leq D[v_i] + l(v_i, v_{i+1})\\
  &\leq d[v_i] + l(v_i, v_{i+1}) \\
  &= d[v_{i+1}]
 \end{flalign*}
It follows that the hypothesis holds for $i+1$.

Next, we prove that $D[v] \geq d[v]$ for every vertex $v \in V$ that is reachable from $s$.
Consider any vertex $v \neq s$ that is reachable from $s$, and let $v_0 = s, v_1,..., v_k = v$ be a path along which Equation~\ref{reachability_ct} is satisfied. Then, $D[v_{i+1}] = D[v_i] + l(v_i, v_{i+1})$ for all $i$ where $0 \leq i \leq k$. We prove by induction on $i$ that $d[v_i] \leq D[v_i]$ for all $i$. 
For $i = 0$,  $d[v_0] = d[s]= 0 = D[v_0]$ from Equation~\ref{start_ct}. Assume that the induction hypothesis is true up to $v_i$ for some $i \geq 0$. From the shortest-paths property, 
 \begin{flalign*}
 d[v_{i+1}] &\leq d[v_i] + l(v_i, v_{i+1})\\
  &\leq D[v_i] + l(v_i, v_{i+1}) \\
  &= D[v_{i+1}]
 \end{flalign*}
It follows that the hypothesis holds for $i+1$.

Alternatively, if $v$ is not reachable from $S$, then $d[v]=D[v]=\infty$ by Equation~\ref{reachability_ct}.
\end{proof}

\begin{lemma}{3}
Assume that the algorithm rejects the certificate, then there exists a vertex $v$ in $V$ where $D[v] \neq d[v]$.
\end{lemma}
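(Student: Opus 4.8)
The plan is to prove the contrapositive by a case analysis on the single reason the algorithm outputs \texttt{False}. Rejection can occur at only three places: the initial test $D[s]\neq 0$, the relaxation test $D[v] > D[u] + l(u,v)$ inside the main loop, and the closing test where some vertex $v$ is left unmarked while $D[v]\neq\infty$. For each I will produce a vertex whose claimed value differs from its true distance; equivalently, I will show that if $D[w]=d[w]$ for every $w\in V$ then all three tests pass and the certificate is accepted. The first case is immediate: if rejection comes from $D[s]\neq 0$, then since $d[s]=0$ we already have $D[s]\neq d[s]$, so $s$ is the witness.

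The second case carries the main algebraic content. When the relaxation test fails on an arc $(u,v)$, the vertex $u$ has just been popped, hence was previously marked reachable; by the push rule the algorithm reaches $u$ from $s$ along a chain of tight arcs, so $u$ is genuinely reachable and $d[u]$ is finite. Assuming for contradiction that $D[w]=d[w]$ for all $w$, the failing inequality becomes $d[v] > d[u] + l(u,v)$, contradicting the triangle inequality $d[v]\le d[u]+l(u,v)$ that holds for any arc $(u,v)$ with $d[u]$ finite. Hence at least one of $u,v$ violates $D[\cdot]=d[\cdot]$, producing a witness.

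The third case splits on whether the offending vertex $v$ is truly reachable in $G$. If it is not, then $d[v]=\infty$ while $D[v]\neq\infty$, so $v$ itself is the witness. If it is reachable, take a true shortest path $s=v_0,\dots,v_k=v$, along which $d[v_{i+1}]=d[v_i]+l(v_i,v_{i+1})$. Were $D=d$ on all of $v_0,\dots,v_k$, this path would be tight in $D$, and since the loop has by now terminated without a relaxation failure it has fully explored the tight subgraph and would have marked $v$; this contradicts the closing test firing. Thus some $v_i$ has $D[v_i]\neq d[v_i]$, giving a witness.

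The step I expect to be the main obstacle is reconciling the argument with negative cycles, since the triangle inequality and the symbol $d[\cdot]$ only make sense when no negative cycle is reachable from $s$. I plan to dispatch that situation first by appealing to Lemma~1: a reachable negative cycle forces the relaxation test to fail on one of its arcs, so the algorithm rejects for a legitimate reason and there is no finite $d$ to match. The case analysis is then carried out under the standing assumption that $d$ is well defined, with the one auxiliary fact that needs a careful statement being that passing every relaxation test certifies the loop has completely computed reachability within the tight subgraph.
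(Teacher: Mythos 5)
Your proof is correct, and its skeleton---a case analysis over the places Algorithm~\ref{sssp} can return False, exhibiting a witness vertex in each---matches the paper's. The substantive difference is how the unmarked-vertex case is closed. The paper argues locally: it takes the shortest-path predecessor $u$ of $v$ (the arc with $d[v] = d[u] + l(u,v)$) and claims that failure of Constraint~\ref{reachability_ct} yields $D[v] < D[u] + l(u,v)$; this tacitly equates ``$v$ has no tight path from $s$'' with ``$v$ has no tight incoming arc,'' which is not the same thing (the arc $(u,v)$ can be tight while $u$ itself is not tight-reachable). Your argument avoids that imprecision by working globally: you induct along an entire true shortest path and invoke completeness of the DFS on the tight subgraph, so that $D = d$ on the whole path would force $v$ to be marked, a contradiction. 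The auxiliary DFS-completeness fact you rightly flag as needing careful statement is exactly what ties the abstract Constraint~\ref{reachability_ct} to what the algorithm actually checks; the paper leaves this correspondence implicit. You also dispatch the reachable-negative-cycle situation up front (where $d$ is not real-valued, so any real or infinite $D$ trivially differs from it), something the paper's Lemma~3 never addresses---its Lemma~1 sits beside, not inside, this proof. The net effect is that your proof is longer but faithful to the algorithm as implemented and closes two gaps the paper glosses over; the paper's version buys brevity at the cost of rigor in precisely those two spots.
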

\begin{proof}
If it turns out that Equation~\ref{start_ct} is not satisfied, then $D[s] \neq d[s]$.

If it turns out that Inequality~\ref{relax_ct} does not hold for an arc $(u,v)$, then $D[v] > D[u] + l(u,v)$.  In this case, if $D[u]$ is indeed the shortest-path value to $u$, meaning that $D[u]=d[u]$, then $D[v]>d[u]+l(u,v)$.
From the shortest-path property, we have $d[v] \leq d[u] + l(u,v)$. It follows that $D[v]>d[u]+l(u,v) \geq d[v]$. In other words, if Inequality~\ref{relax_ct} does not hold for an arc $(u,v)$, then either $D[u] \neq d[u]$ or $D[v] \neq d[v]$.

If it turns out that Equation~\ref{reachability_ct} does not hold for every arc $(u,v)$ entering a given vertex $v$ that is reachable from $s$, assuming Inequality~\ref{relax_ct} holds, then $D[v]< D[u]+l(u,v)$. For every vertex $v$ reachable from $s$ there must exist an arc $(u,v)$ for which $d[v]=d[u]+l(u,v)$. Assume that for this vertex $u$, $D[u]=d[u]$. It follows that $d[v]=D[u]+l(u,v) > D[v]$. In other words, if Equation~\ref{reachability_ct} does not hold for every arc entering vertex $v$, then there is an arc $(u,v)$ where either $D[u] \neq d[u]$ or $D[v] \neq d[v]$.

If it turns out that Equation~\ref{unreachable} is not satisfied at a vertex $v$ that is not reachable from $s$, then $D[v] \neq d[v]$. 
\end{proof}





\subsection{Complexity}
The time bound for certifying the single-source shortest-path problem is indicated in the next lemma.

\begin{lemma}{1}
     The proposed certifying algorithm for the single-source shortest-path problem runs in $O(n+m)$ time.
\end{lemma}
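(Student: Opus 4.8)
The plan is to decompose the algorithm into its three distinct phases and bound each one separately, then sum the bounds. The first phase is the initialization: testing whether $D[s] \neq 0$, pushing $s$ onto the stack, and marking $s$ as reachable. Assuming the stack supports $O(1)$ push and pop and that the reachability status is stored in an array permitting $O(1)$ lookups and updates, this phase costs $O(1)$. The final phase is the closing loop over all vertices that checks, for each $v \notin reachable$, whether $D[v] = \infty$; this performs $O(1)$ work per vertex and therefore runs in $O(n)$ time. What remains, and what carries the bulk of the cost, is the main \textbf{while} loop.

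The key step is to establish that each vertex is pushed onto the stack at most once. This follows from the guard on the push: a vertex $v$ is pushed only when the test $v \notin reachable$ succeeds, and it is marked as reachable in the very same step before any subsequent iteration can revisit it. Since the reachable flag is never cleared, the guard can succeed for a given vertex only once, so each vertex is pushed --- and hence popped --- at most once. Consequently, the body of the \textbf{while} loop executes at most $n$ times. On the iteration that pops a vertex $u$, the inner \textbf{for} loop scans exactly the arcs leaving $u$, i.e.\ the adjacency list $u.adj[]$, spending $O(1)$ per arc. Summing over all vertices that are ever popped, the total work of the inner loop is
\[
O\!\left(\sum_{u \in V} |u.adj[]|\right) = O(m),
\]
and the outer-loop overhead (pops and emptiness tests) contributes an additional $O(n)$.

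Adding the three phases gives $O(1) + O(n + m) + O(n) = O(n + m)$, as claimed. I expect the only genuine subtlety --- the main obstacle --- to lie in the amortized argument for the inner loop: one must be careful to charge arc scans to the single pop of their tail vertex rather than to each time a vertex is encountered as a neighbour, which is precisely what the push-at-most-once invariant licenses. The remaining ingredient is the standing assumption that $G$ is supplied in adjacency-list form and that the auxiliary structures (the stack and the reachability array) support constant-time operations; in the standard word-RAM model this is routine, but it should be stated explicitly so that the $O(m)$ arc-scanning bound is not mistaken for a worst-case $O(n^2)$ cost that a matrix representation would incur.
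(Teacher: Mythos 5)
Your proof is correct, and it is somewhat more careful than the paper's own. The paper argues by decomposing the analysis along the four constraint sets: $O(1)$ for the source check, $O(m)$ for checking Inequality~(2) over all arcs, $O(n+m)$ for constraint~(3) by invoking the standard bound for ``one DFS from $s$,'' and $O(n)$ for the unreachability check --- it never justifies the DFS bound itself, nor does it note that in Algorithm~1 the checks for constraints~(2) and~(3) are in fact fused into the same traversal rather than being separate passes. You instead follow the algorithm as written, phase by phase, and you supply precisely the ingredient the paper treats as folklore: the push-at-most-once invariant (a vertex is pushed only when its reachable flag is unset, and the flag is set in the same step and never cleared), from which the $\sum_{u}|u.adj[]| = O(m)$ arc-scanning bound follows by charging each scan to the unique pop of its tail. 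You also make explicit the model assumptions (adjacency-list representation, $O(1)$ stack and flag operations) that the paper leaves implicit. Both routes yield $O(n+m)$; the paper's is shorter because it leans on the known DFS analysis, while yours is self-contained and faithful to the pseudocode, which matters here since the traversal is not a textbook DFS --- it only pushes neighbours reached by tight arcs --- so citing the generic DFS bound, as the paper does, is slightly loose, and your direct argument closes that gap.
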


\begin{proof}
The proposed certifying algorithm maintains four sets of constraints. Verifying the first condition requires $O(1)$, while verifying the second one for every arc requires $O(m)$ time. To verify the third constraints, we can use one DFS from $s$ and hence requires $O(n+m)$ time. Verifying the fourth constraint is done afterward by checking the given distances for all the unreachable vertices from $s$ in $(n)$ time.
\end{proof}

\section{Conclusion}
In this paper we presented a linear-time certifying algorithm for the shortest-path problem for graphs with positive, zero and negative arc weights. 
The preexisting shortest-path certifying algorithm is restricted to positive arc lengths only. 

Once we have the certificate, it is definitly better to run an $O(n + m)$ certifying algorithm in comparison with computing the shortest paths from scratch by running the Bellman-Ford or an equivalent algorithm that runs in $O(n \cdot m)$ time.  


\end{document}